\providecommand{\tabularnewline}{\\}
\providecommand{\algorithmname}{Algorithm}
\newcommand{\lyxaddress}[1]{
	\par {\raggedright #1
	\vspace{1.4em}
	\noindent\par}
}
\theoremstyle{plain}
\newtheorem{thm}{\protect\theoremname}
\theoremstyle{plain}
\newtheorem{prop}[thm]{\protect\propositionname}
\theoremstyle{definition}
\newtheorem{example}[thm]{\protect\examplename}
  \providecommand{\propositionname}{Proposition}
\providecommand{\theoremname}{Theorem}
  \providecommand{\propositionname}{Proposition}
\providecommand{\theoremname}{Theorem}
\providecommand{\examplename}{Example}
\providecommand{\propositionname}{Proposition}
\providecommand{\theoremname}{Theorem}
\begin{document}

\title{Ensemble Rejection Sampling}

\author{George Deligiannidis$^{1}$, Arnaud Doucet$^{1}$\thanks{corresponding author: doucet@stats.ox.ac.uk}~~and
Sylvain Rubenthaler$^{2}$}
\maketitle

\lyxaddress{$^{1}$Department of Statistics, University of Oxford, UK.}

\lyxaddress{$^{2}$Laboratoire Jean Dieudonné, Université Nice Sophia Antipolis,
France.}
\begin{abstract}
We introduce Ensemble Rejection Sampling, a scheme for exact simulation
from the posterior distribution of the latent states of a class of
non-linear non-Gaussian state-space models. Ensemble Rejection Sampling
relies on a proposal for the high-dimensional state sequence built
using ensembles of state samples. Although this algorithm can be interpreted
as a rejection sampling scheme acting on an extended space, we show
under regularity conditions that the expected computational cost to
obtain an exact sample increases cubically with the length of the
state sequence instead of exponentially for standard rejection sampling.
We demonstrate this methodology by sampling exactly state sequences
according to the posterior distribution of a stochastic volatility
model and a non-linear autoregressive process. We also present an
application to rare event simulation.
\end{abstract}
{\small{}{}{}Keywords: }Exact simulation; Feynman\textendash Kac
models; Hidden Markov models; Rare event simulation; Rejection sampling;
State-space models.

\section{Introduction\label{section:introduction}}

Rejection sampling (RS) is a standard algorithm introduced by John
von Neumann to sample exactly from distributions only known up to
a normalizing constant. This is achieved by thinning samples from
a suitable proposal distribution; see, e.g., \citep[Chapter 2]{Devroye1986}.
However, RS is deemed inefficient to sample high-dimensional distributions
as the computational cost required to obtain an exact sample increases
typically exponentially with the dimension. This serious limitation
has partly motivated the development of Markov chain Monte Carlo (MCMC)\ and
particle methods.

We are interested here in sampling from the posterior distribution
of a sequence of latent states of length $T$ of a non-linear state-space
model. As $T$ is large in most applications of interest, RS is inefficient
in this context even when the dimension of a single state is small.
Consequently, one typically relies instead on particle smoothing schemes
\citep{GodsillDoucetWest2004} or MCMC techniques such as Embedded
Hidden Markov Model (EHMM) \citep{Neal2003,Neal2004} or particle
MCMC \citep{AndrieuDoucetHolenstein2010}. However, particle smoothing
schemes only return approximate samples from the posterior. Similarly,
as any MCMC scheme initialized out-of-equilibrium, EHMM and particle
MCMC algorithms provide biased samples after a finite number of iterations.
While many MCMC-based algorithms to sample exactly from complex probability
distributions have been developed following the introduction of coupling
from the past \citep{ProppWilson1996} - see, e.g., \citep{Kendall2005}
and \citep{Murdoch1998} -, none of the procedures developed in these
contributions appear applicable to posterior simulation in state-space
models. To our knowledge, the only exact simulation procedure available
for such models has been recently proposed in \citep{ChopinDoucetRubenthaler2019}
and relies on a combination of dominated coupling from the past and
an original version of particle MCMC based on branching processes.
We follow here an alternative approach based on RS. Contrary to the
methodology proposed in \citep{ChopinDoucetRubenthaler2019}, it is
only applicable to models where the target distribution satisfies
a backward Markovian decomposition (see, e.g., \citep{GodsillDoucetWest2004}
and \citep[Proposition 9.14][]{Douc2013}) but it is significantly
easier to analyze and implement. Additionally, this algorithm does
not suffer from the ``user-impatience'' bias problem of procedures
based on the coupling-from-the-past protocol \citep{Kendall2005}.

We show that it is indeed possible to leverage the structural properties
of state-space models to perform exact simulation using RS ideas.
This is achieved by introducing a novel RS scheme which we call Ensemble
Rejection Sampling (ERS) as it builds a proposal for the high-dimensional
state sequence based on ensembles of state samples which can be sampled
efficiently using dynamic programming techniques. This proposal is
similar to the one used in a MCMC algorithm recently introduced in
\citep[Section 3.2]{FinkeDoucetJohansen2016} which relies on EHMM
ideas. For the corresponding acceptance probability not to vanish
exponentially fast with $T$, we introduce a novel auxiliary target
distribution which admits a marginal distribution coinciding with
the posterior distribution of interest while being ``close'' to the
proposal distribution. This auxiliary target distribution differs
from the one introduced in \citep{Neal2003,Neal2004} and used in
\citep{FinkeDoucetJohansen2016,Shestopaloff2016} to establish the
validity of the MCMC algorithms based on EHMM ideas proposed therein.

Under a strong regularity assumption in the spirit of the assumptions
used in the literature to establish quantitative bounds for particle
methods \citep[Chapter 4]{DelMoral2004}, we show that the expected
computational cost to obtain one exact sample from the posterior increases
only cubically with $T$ using ERS instead of exponentially using
standard RS. We demonstrate the algorithm on two state-space models
and a rare event problem.

The rest of the paper is organized as follows. For pedagogical reasons,
we first introduce the ERS methodology in a simple `static' scenario
in Section \ref{sec:Ensemble-Rejection-Samplingstatic}. In this scenario,
ERS is of no practical interest but this allows us to introduce the
main ideas behind our construction. Section \ref{sec:Ensemble-Rejection-Samplingdyn}
presents and analyzes ERS for the more complex scenario of (dynamic)
state-space models. We present a few applications of ERS in Section
\ref{sec:examples}.

\section{Ensemble Rejection Sampling: Static Case\label{sec:Ensemble-Rejection-Samplingstatic}}

\subsection{Set-up and algorithm}

In this section, we are interested in sampling exactly from a distribution
on a measurable space $\left(\mathcal{X},\mathbf{\mathbb{X}}\right)$
admitting a density of the form
\begin{equation}
\pi\left(x\right)=\frac{\gamma\left(x\right)}{Z}\label{eq:targetstatic}
\end{equation}
w.r.t. a suitable dominating measure denoted $\mathrm{d}x$. We assume
that one can evaluate $\gamma:\mathcal{X}\mathbf{\rightarrow}\mathbb{R}^{+}$
pointwise while the normalizing constant $Z=\int_{\mathcal{X}}\gamma\left(x\right)\mathrm{d}x$
is typically not available.\ We also assume that we have access to
a proposal distribution admitting a density $q\left(x\right)$ w.r.t.
$\mathrm{d}x$ such that
\[
\sup_{x\in\mathcal{X}}\text{ }w\left(x\right)\leq\overline{w}<\infty,\text{ \ where \ }w\left(x\right):=\frac{\gamma\left(x\right)}{q\left(x\right)},
\]
the bound $\overline{w}$ being known.

When performing RS to sample from $\pi$ using $q$, the average acceptance
probability of a proposal is $p_{\text{RS}}=Z/\overline{w}$; see,
e.g., \citep[Chapter 2]{Devroye1986}. If one has access to an ensemble
of $N$ proposals $X^{1},...,X^{N}\overset{\text{i.i.d.}}{\sim}q$,
we show here that we can use this ensemble to obtain a new proposal
$X$ which will have a higher average acceptance probability. This
is achieved by computing the following normalized importance sampling
approximation of $\pi$
\begin{equation}
\widehat{\pi}\left(\cdot\right)=\sum_{i=1}^{N}W^{i}\delta_{X^{i}}\left(\cdot\right),\label{eq:ISapproxTarget}
\end{equation}
where
\begin{equation}
W^{i}=\frac{w(X^{i})}{N\widehat{Z}},\quad\widehat{Z}=\frac{1}{N}\sum_{i=1}^{N}w(X^{i}),\label{eq:Zhat}
\end{equation}
then sampling from this approximation, $X\sim\widehat{\pi}$, as in
sampling importance resampling \citep{Rubin1988,Smith1992}. Note
that $\widehat{Z}$ is an unbiased estimate of $Z$. If we denote
$\mathbf{X}:=(X_{1},\ldots,X_{N})$, the probability distribution
of the proposal $X$ has a density given by
\begin{equation}
\overline{q}\left(\cdot\right)=\mathbb{E}_{\mathbf{X}}\left[\widehat{\pi}\left(\cdot\right)\right].\label{eq:proposal}
\end{equation}
However, we cannot compute the corresponding density pointwise as
it is given by an intractable expectation. The ERS scheme described
in Algorithm \ref{alg:RES} bypasses this issue by accepting the proposal
$X$ with probability $\widehat{Z}/\overline{Z}$ where $\overline{Z}$
is an upper bound on $\widehat{Z}$ built using all the samples except
the one selected as the proposal. To avoid unnecessary complications,
we also assume here that $w\left(x\right)>0$ $q-$almost surely\footnote{If this assumption is not satisfied and $w\left(X^{1}\right)=\cdots=w\left(X^{N}\right)=0$,
then we cannot build $\widehat{\pi}$ in Step 2 and return to Step
1.}.

\begin{algorithm}[H]
\caption{\textbf{Ensemble Rejection Sampling: Static Case}\label{alg:RES}}

\begin{enumerate}
\item Sample $X^{1},...,X^{N}\overset{\text{i.i.d.}}{\sim}q$.
\item Sample $X\sim\widehat{\pi}$ given by (\ref{eq:ISapproxTarget}) and
compute $\widehat{Z}$ given by (\ref{eq:Zhat}).
\item Compute the upper bound $\overline{Z}=\widehat{Z}+\frac{1}{N}\left(\overline{w}-w(X)\right)$
on $\widehat{Z}$.
\item With probability
\begin{equation}
\frac{\widehat{Z}}{\overline{Z}},\label{eq:acceptproba}
\end{equation}
output $X$. Otherwise, return to Step 1.
\end{enumerate}
\end{algorithm}

\subsection{Proof of correctness}

The ERS scheme described in Algorithm \ref{alg:RES} satisfies the
following property.
\begin{prop}
\label{prop:ERSstatic} For any $N\geq1$, Algorithm \ref{alg:RES}
returns an exact sample from the distribution $\pi$ defined in (\ref{eq:targetstatic})
and the average acceptance probability \textup{$p_{\text{ERS}}$}
of a proposal satisfies
\[
p_{\textup{ERS}}\geq\frac{Np_{\textup{RS}}}{1+(N-1)p_{\textup{RS}}}.
\]
In particular, one has \textup{$p_{\text{ERS}}\rightarrow1$} as $N\rightarrow\infty$.
\end{prop}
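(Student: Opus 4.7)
The plan is to view Algorithm~\ref{alg:RES} as rejection sampling on the extended space $\mathcal{X}^N \times \{1,\ldots,N\}$ carrying the random variables $(\mathbf{X},K)$, where $\mathbf{X} = (X^1,\ldots,X^N) \overset{\text{i.i.d.}}{\sim} q$ and, conditionally on $\mathbf{X}$, the index $K$ is drawn from the categorical distribution with weights $W^i = w(X^i)/(N\widehat{Z})$. The proposed output is $X = X^K$. First I would verify that the quantity in~\eqref{eq:acceptproba} is genuinely a probability: since $w(X^K) \leq \overline{w}$ we have $\overline{Z} \geq \widehat{Z} > 0$, so $\widehat{Z}/\overline{Z} \in (0,1]$.

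Next I would compute, for arbitrary measurable $B \subset \mathcal{X}$, the joint probability
\begin{equation*}
\mathbb{P}(X \in B, A) = \mathbb{E}\!\left[\sum_{k=1}^{N} \frac{w(X^k)}{N\widehat{Z}} \, \mathbf{1}_{X^k \in B} \, \frac{\widehat{Z}}{\overline{Z}}\right] = \mathbb{E}\!\left[\sum_{k=1}^{N} \frac{w(X^k)\, \mathbf{1}_{X^k \in B}}{\sum_{j \neq k} w(X^j) + \overline{w}}\right],
\end{equation*}
where $A$ denotes the acceptance event and I used $N\overline{Z} = \sum_{j \neq k} w(X^j) + \overline{w}$ when conditioning on $K=k$. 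By exchangeability of the $X^i$'s the sum collapses to $N$ times a single term, and I would then integrate out $X^1$ using the identity $w(x^1) q(x^1) = \gamma(x^1)$. This yields
\begin{equation*}
\mathbb{P}(X \in B, A) = N \left(\int_B \gamma(x)\,\mathrm{d}x\right) \mathbb{E}\!\left[\frac{1}{\sum_{j=2}^{N} w(X^j) + \overline{w}}\right].
\end{equation*}
The factor involving the expectation is independent of $B$, so taking $B = \mathcal{X}$ gives $p_{\text{ERS}} = \mathbb{P}(A) = N Z \,\mathbb{E}[(\sum_{j=2}^{N} w(X^j) + \overline{w})^{-1}]$ and ratioing shows $\mathbb{P}(X \in B \mid A) = \pi(B)$, which proves exactness.

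Finally, to obtain the lower bound on $p_{\text{ERS}}$, I would apply Jensen's inequality to the convex map $u \mapsto 1/u$: since $X^2,\ldots,X^N$ are i.i.d.\ with $\mathbb{E}[w(X^j)] = \int \gamma = Z$,
\begin{equation*}
\mathbb{E}\!\left[\frac{1}{\sum_{j=2}^{N} w(X^j) + \overline{w}}\right] \geq \frac{1}{(N-1)Z + \overline{w}},
\end{equation*}
which, combined with $p_{\text{RS}} = Z/\overline{w}$, gives the advertised inequality after dividing numerator and denominator by $\overline{w}$. The limit $p_{\text{ERS}} \to 1$ as $N \to \infty$ is then immediate. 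The only slightly delicate step is the exchangeability-plus-integration manipulation that collapses the sum over $k$ and exploits the cancellation $wq=\gamma$; everything else is mechanical.
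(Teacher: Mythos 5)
Your proof is correct and follows essentially the same route as the paper: the direct computation of $\mathbb{P}(X\in B,\,A)$, in which $\widehat{Z}$ cancels and $N\overline{Z}=\sum_{j\neq k}w(X^j)+\overline{w}$ depends only on the non-selected samples, is exactly the paper's identity relating $\overline{q}\,\widehat{Z}/Z$ to the auxiliary target $\widetilde{\pi}$, merely phrased as an explicit integration rather than via an extended target density. The Jensen step and the resulting bound $NZ/((N-1)Z+\overline{w})$ coincide with the paper's argument.
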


\begin{proof}
For $N=1$, the result is trivial as ERS coincides with the standard
RS algorithm. To establish this result for $N\geq2$, we show that
Algorithm \ref{alg:RES} is a standard RS scheme sampling a target
distribution defined on an extended space; this extended target distribution
admitting a marginal distribution equal to $\pi$. Sampling from the
proposal in Step 2 of Algorithm \ref{alg:RES} can be rewritten as
sampling from the distribution
\begin{equation}
\overline{q}(k,\mathbf{x},x)=\left\{ {\textstyle \prod\nolimits _{i=1}^{N}}q(x^{i})\right\} \text{ }\frac{w(x^{k})}{\sum_{i=1}^{N}w(x^{i})}\text{ }\delta_{x^{k}}\left(x\right),\label{eq:extendedproposal}
\end{equation}
where $\mathbf{x}:=(x^{1},...,x^{N})$; i.e. sample $X^{1},...,X^{N}\overset{\text{i.i.d.}}{\sim}q$
then sample an index $K\in\left\{ 1,...,N\right\} $ where $\mathbb{P}(K=k)\propto w(x^{k})$
and set $X=X^{K}.$ To simplify notation, we avoid here measure-theoretic
notation for the Dirac measure. We can indeed easily check that the
distribution of $X$ under (\ref{eq:extendedproposal}) is equal to
(\ref{eq:proposal}). As this proposal is accepted with probability
(\ref{eq:acceptproba}), accepted samples are distributed according
to a distribution $\overline{\pi}(k,\mathbf{x},x)$ satisfying
\begin{align}
\overline{\pi}(k,\mathbf{x},x) & \propto\overline{q}(k,\mathbf{x},x)\frac{\widehat{Z}}{\overline{Z}}.\label{eq:targetbarstaticgen}
\end{align}
Elementary calculations show that
\begin{align}
\overline{q}(k,\mathbf{x},x)\text{ }\frac{\widehat{Z}}{Z} & =\frac{\pi\left(x^{k}\right)}{N}\delta_{x^{k}}\left(x\right){\textstyle \prod\limits _{i=1,i\neq k}^{N}}q(x^{i}):=\widetilde{\pi}\left(k,\mathbf{x},x\right).\label{eq:identitypmcmc}
\end{align}
The identity (\ref{eq:identitypmcmc}) has been used for example in
\citep[Theorem 1]{AndrieuDoucetHolenstein2010}. By using (\ref{eq:identitypmcmc}),
we can rewrite (\ref{eq:targetbarstaticgen}) as
\begin{equation}
\overline{\pi}(k,\mathbf{x},x)\propto\pi\left(x^{k}\right)\delta_{x^{k}}\left(x\right)\underset{\text{term independent of }x=x^{k}}{\underbrace{\frac{{\textstyle \prod\limits _{i=1,i\neq k}^{N}}q(x^{i})}{\overline{Z}}.}}\label{eq:targetpibarstatic}
\end{equation}
The last term on the right hand side of (\ref{eq:targetpibarstatic})
does not indeed depend on $x=x^{k}$ as $\overline{Z}=\frac{1}{N}(\overline{w}+\sum_{i=1,i\neq k}^{N}w(x^{i})$).
From (\ref{eq:targetpibarstatic}), it follows directly that the marginal
distribution of $X$ under $\overline{\pi}$ satisfies
\begin{align*}
\overline{\pi}\left(x\right) & :={\textstyle \sum\nolimits _{k=1}^{N}}\int\cdots\int_{\mathcal{X}^{N}}\overline{\pi}(k,\mathbf{x},x)\mathrm{d}\mathbf{x}\\
 & =\pi\left(x\right).
\end{align*}

Hence we have shown that Algorithm \ref{alg:RES}\ is a RS algorithm
targeting $\overline{\pi}(k,\mathbf{x},x)$ using the proposal $\overline{q}(k,\mathbf{x},x)$.
As $\overline{\pi}\left(x\right)=\pi\left(x\right)$, this returns
in particular a sample from $\pi$.

The average acceptance probability of a proposal for ERS is given
by
\begin{align*}
p_{\text{ERS}}=\mathbb{E}_{\overline{q}}\left[\frac{\widehat{Z}}{\overline{Z}}\right] & =Z\mathbb{E}_{\widetilde{\pi}}\left[\frac{1}{\overline{Z}}\right]\\
 & \geq\frac{Z}{\mathbb{E}_{\widetilde{\pi}}\left[\thinspace\overline{Z}\thinspace\right]}\\
 & =\frac{NZ}{\left(N-1\right)Z+\overline{w}}\\
 & =\frac{Np_{\text{RS}}}{(N-1)p_{\text{RS}}+1},
\end{align*}

where we have exploited, in turn, the identity (\ref{eq:identitypmcmc}),
Jensen's inequality and the fact that $X_{i}\sim q$ under $\widetilde{\pi}$
for $i\neq K$ and that $\overline{Z}$ only depends on $\mathbf{X}\setminus\{X^{K}\}$.
\end{proof}
An obvious competitor to ERS consists of simply running $N$ independent
standard RS algorithms, this has approximately the same computational
complexity and the probability of obtaining at least one sample from
$\pi$ is $1-(1-p_{\text{RS}})^{N}$. In our experiments ERS never
outperforms this simple procedure in terms of expected computational
efforts required to obtain an exact sample from $\pi$. Algorithm
\ref{alg:RES} is thus of limited practical interest. %
However, we will see in the next section that the main ideas behind
this algorithm can be extended to propose a useful exact simulation
algorithm for state-space models.

\section{Ensemble Rejection Sampling: Dynamic Case\label{sec:Ensemble-Rejection-Samplingdyn}}

\subsection{Algorithm}

Let $z_{i:j}:=(z_{i},z_{i+1},...,z_{j})$ for $i\leq j$ and $[m]:=\{1,2,...,m\}$
for any integer $m\geq1$. We now consider the case where we are interested
in sampling the posterior distribution of the latent states $X_{1:T}\in\mathcal{X}^{T}$
of a state-space model given a realization $Y_{1:T}=y_{1:T}$ of the
observations, i.e. we consider a target distribution of density
\begin{equation}
\pi(x_{1:T})=\frac{\gamma(x_{1:T})}{Z},\label{eq:targetSSM}
\end{equation}
where
\begin{equation}
\gamma(x_{1:T})=p(x_{1:T},y_{1:T})=\mu(x_{1})g(y_{1}|x_{1}){\textstyle \prod\nolimits _{t=2}^{T}}f(\left.x_{t}\right\vert x_{t-1})g(y_{t}|x_{t}),\label{eq:unnormalizedtargetSSM}
\end{equation}
$\mu$ being the initial prior density of the latent Markov process,
$f$ its homogeneous transition density and $g$ defining the conditional
density of the observations; all these densities being defined w.r.t.
to suitable dominating measures. The normalizing constant is thus
given by
\[
Z=p(y_{1:T})=\int\cdots\int_{\mathcal{X}^{T}}p(x_{1:T},y_{1:T})\mathrm{d}x_{1:T}.
\]
The algorithm detailed below is more generally applicable to a class
of Feynman-Kac models \citep{DelMoral2004}, i.e. replace $g(y_{t}|x_{t})$
in (\ref{eq:unnormalizedtargetSSM}) by a non-negative function $G_{t}(x_{t})$;
see Section \ref{sec:examples} for an application to rare event simulation.

As in Section \ref{sec:Ensemble-Rejection-Samplingstatic}, we will
sample from the target using a RS mechanism based on a proposal built
using an importance sampling approximation of $\pi$. This approximation
is obtained as follows. We draw $X_{t}^{1},...,X_{t}^{N}\overset{\text{i.i.d.}}{\sim}q_{t}$
for $t\in[T]$, these samples define a random grid in $\mathcal{X}^{T}$.
From these ensembles of samples, we can build $N^{T}$ paths $X_{1:T}^{i_{1:T}}:=(X_{1}^{i_{1}},...,X_{T}^{i_{T}})$
where $i_{1:T}\in\left[N\right]^{T}$ and each such path is marginally
distributed according to $\prod\nolimits _{t=1}^{T}q_{t}(x_{t})$.
Thus a self-normalized importance sampling approximation of $\pi$
is given by
\begin{equation}
\widehat{\pi}\left(\cdot\right)=\sum_{i_{1:T}\in\left[N\right]^{T}}W^{i_{1:T}}\delta_{X_{1:T}^{i_{1:T}}}\left(\cdot\right),\label{eq:ISapproxtargetjoint}
\end{equation}
where
\begin{equation}
W^{i_{1:T}}=\frac{w(X_{1:T}^{i_{1:T}})}{N^{T}\widehat{Z}},\quad\widehat{Z}=\frac{1}{N^{T}}\sum_{i_{1:T}\in\left[N\right]^{T}}w(X_{1:T}^{i_{1:T}}).\label{eq:ZhatHMM}
\end{equation}
Here the unnormalized importance weights are given by
\[
w(x_{1:T}^{i_{1:T}})=\frac{p(x_{1:T}^{i_{1:T}},y_{1:T})}{{\textstyle \prod\nolimits _{t=1}^{T}q_{t}(x_{t}^{i_{t}})}}=w_{1}(x{}_{1}^{i_{1}}){\textstyle \,\prod\nolimits _{t=2}^{T}}\,w_{t}(x{}_{t-1}^{i_{t-1}},x_{t}^{i_{t}}),
\]
where the `incremental' importance weights satisfy
\[
w_{1}(x_{1}):=\frac{\mu(x_{1})g(y_{1}|x_{1})}{q_{1}(x_{1})},\qquad w_{t}(x_{t-1},x_{t}):=\frac{f(\left.x_{t}\right\vert x_{t-1})g(\left.y_{t}\right\vert x_{t})}{q_{t}(x_{t})}~\textrm{for}\ t\geq2.
\]
Note that $\widehat{Z}$ is an unbiased estimator of $Z$. Even if
$\pi$ is defined on the high-dimensional space $\mathcal{X}^{T}$,
importance sampling can be expected to provide a decent approximation
to $\pi$ as $\widehat{\pi}$ relies on an exponentially large number
$N^{T}$ of (dependent) samples. Once the random grid has been sampled,
our proposal is simply obtained by sampling $X_{1:T}\sim\widehat{\pi}$
given in (\ref{eq:ISapproxtargetjoint}). If we denote $\mathbf{X}_{t}=(X_{t}^{1},...,X_{t}^{N})$
the random samples generated at time $t$, the marginal distribution
of the proposal $X_{1:T}$ is thus given by
\begin{equation}
\overline{q}\left(\cdot\right)=\mathbb{E}_{\mathbf{X}_{1:T}}\left[\widehat{\pi}\left(\cdot\right)\right].\label{eq:proposalembeddedHMM}
\end{equation}

This proposal has been introduced in an independent Metropolis\textendash Hastings
scheme in \citep[Section 3.2]{FinkeDoucetJohansen2016} and it is
a slight variation over the approach proposed in \citep{Neal2003}\footnote{
\title{In \citep{Neal2003}, a MCMC scheme is proposed to sample $\pi$.
At each iteration, $N-1$ particles $X_{t}^{1},...,X_{t}^{N-1}$ are
sampled for each $t\in[T]$ using two MCMC kernels of invariant distribution
$q_{t}$ initialized using the component $X_{t}$ of a reference path.
At equilibrium, the reference path is distributed according to $\pi$.}\maketitle
}.

We will assume from now on that
\begin{equation}
\sup_{x_{1}}w_{1}(x_{1})\leq\overline{w}_{1}<\infty,\qquad\sup_{x_{t-1},x_{t}}w_{t}(x_{t-1},x_{t})\leq\overline{w}_{t}<\infty~\textrm{for}\ t\geq2,\label{eq:importanceweightsbounded}
\end{equation}
where these supremums only have to be taken over the support of the
corresponding marginals of $\pi$ and $\overline{w}_{1},...,\overline{w}_{T}$
are known. We might have also access to some upper bounding functions
for $t\geq2$
\begin{equation}
\sup_{x_{t}}w_{t}(x_{t-1},x_{t})\leq\overline{w}_{t}^{1}(x_{t-1}),\qquad\sup_{x_{t-1}}w_{t}(x_{t-1},x_{t})\leq\overline{w}_{t}^{2}(x_{t}).\label{eq:importanceweigtspartialbounds}
\end{equation}
We can always select $\overline{w}_{t}^{1}(x_{t-1})=\overline{w}_{t}$
and $\overline{w}_{t}^{2}(x_{t})=\overline{w}_{t}$ if tighter bounding
functions are not available.

As observed in \citep{Neal2003}, a key point is that it is possible
to sample exactly from proposals of the form (\ref{eq:proposalERSdyn})
in $O(N^{2}T)$ operations despite the fact that it is a discrete
distribution defined on a space of cardinality $N^{T}$. This can
be achieved by using a HMM-type recursion detailed in Algorithm \ref{alg:HMMrecursion}
for the `embedded' $N$-state Markov chain of state space $\left\{ X_{t}^{1},...,X_{t}^{N}\right\} $
at time $t$, initial probability proportional to $w_{1}(X_{1}^{i})$,
transition probabilities proportional to $f(X_{t}^{j}|X_{t-1}^{i})$
and emission probabilities proportional to $g(y_{t}|X_{t}^{i})/q_{t}(X_{t}^{i})$
at time $t\geq2$. Indeed sampling from $\widehat{\pi}$ is equivalent
to sampling from the posterior distribution of this embedded HMM defined
as
\[
\widetilde{p}(x_{1:T}|y_{1:T}):=\frac{w(x_{1:T})}{N^{T}\widehat{Z}}.
\]
It is well-known that one can sample from this discrete distribution
using a forward filtering-backward sampling algorithm exploiting the
decomposition\footnote{This decomposition has also been used to sample exactly from the posterior
distribution of linear Gaussian state-space models by leveraging Kalman
recursions \citep{CarterKohn1994}.}
\begin{equation}
\widetilde{p}(x_{1:T}|y_{1:T})=\widetilde{p}(x_{T}|y_{1:T})\prod_{t=1}^{T-1}\widetilde{p}(x_{t}|y_{1:t},x_{t+1}),\label{eq:forwardbackward}
\end{equation}
where
\begin{equation}
\widetilde{p}(x_{t}|y_{1:t},x_{t+1}):=\frac{f(x_{t+1}|x_{t})\widetilde{p}(x_{t}|y_{1:t})}{\sum_{i=1}^{N}f(x_{t+1}|X_{t}^{i})\widetilde{p}(X_{t}^{i}|y_{1:t})}.\label{eq:backwardfilter}
\end{equation}
As a byproduct of the HMM forward recursion for the embedded chain,
we also compute $\widehat{Z}$ in $O(N^{2}T)$ operations; see Algorithm
\ref{alg:HMMrecursion} for details.

As in Section \ref{sec:Ensemble-Rejection-Samplingstatic}, ERS requires
being able to compute an upper bound $\overline{Z}$ on $\widehat{Z}$
built using all the samples $\mathbf{X}_{1:T}$ except the ones selected
as the proposal $X_{1:T}$. This proposal is of the form $X_{1:T}=(X_{1}^{K_{1}},...,X_{T}^{K_{T}})$
for $(K_{1},...,K_{T})\in\left[N\right]^{T}$ such that $\mathbb{\mathrm{Pr}}\left(\left(K_{1},...,K_{T}\right)=\left(k_{1},...,k_{T}\right)\right)=\widetilde{p}(x_{1:T}^{k_{1:T}}|y_{1:T})$.
From (\ref{eq:ZhatHMM}), $\widehat{Z}$ is an average of $N^{T}$
products of $T$ terms of the form $w_{1}(X{}_{1}^{i_{1}}){\textstyle \prod\nolimits _{t=2}^{T}}w_{t}(X{}_{t-1}^{i_{t-1}},X_{t}^{i_{t}})$.
We obtain $\overline{Z}$ by upper bounding any term $w_{1}(X{}_{1}^{K_{1}})$
by $\overline{w}_{1}$, $w_{t}(X{}_{t-1}^{i_{t-1}},X_{t}^{K_{t}})$
by $\overline{w}_{t}^{1}(X_{t-1}^{i_{t-1}})$ for $i_{t-1}\neq K{}_{t-1}$,
$w_{t}(X{}_{t-1}^{K_{t-1}},X_{t}^{i_{t}})$ by $\overline{w}_{t}^{2}(X_{t}^{i_{t}})$
for $i_{t}\neq K{}_{t}$ and $w_{t}(X{}_{t-1}^{K_{t-1}},X_{t}^{K_{t}})$
by $\overline{w}_{t}$ for $t=2,...,T$. This bound can be computed
in $O(N^{2}T)$ operations using a simple modification of the HMM
recursion presented in Algorithm \ref{alg:HMMrecursion}; this is
detailed in Algorithm \ref{alg:boundingHMMrecursion}. Algorithm \ref{alg:RES-Dyn}
summarizes the ERS scheme. For $T=1$, this algorithm corresponds
to Algorithm \ref{alg:RES} applied to $\gamma(x)=\mu(x)g(y_{1}|x)$
using $q(x)=q_{1}(x)$. We also assume that $w(x_{1:T})>0$ ${\textstyle \prod\nolimits _{t=1}^{T}q_{t}-}$almost
surely to simplify presentation.

\begin{algorithm}[H]
\caption{\textbf{Ensemble Rejection Sampling: Dynamic Case}\label{alg:RES-Dyn}}

\begin{enumerate}
\item For $t\in[T]$, sample $X_{t}^{1},...,X_{t}^{N}\overset{\text{i.i.d.}}{\sim}q_{t}$.
\item Sample $X_{1:T}\sim\widehat{\pi}$ given by (\ref{eq:ISapproxtargetjoint})
and compute $\widehat{Z}$ given by (\ref{eq:ZhatHMM}) using Algorithm
\ref{alg:HMMrecursion}.
\item Compute an upper bound $\overline{Z}$ on $\widehat{Z}$ using Algorithm
\ref{alg:boundingHMMrecursion}.
\item With probability
\begin{equation}
\frac{\widehat{Z}}{\overline{Z}},\label{eq:acceptprobaERSdyn}
\end{equation}
output $X_{1:T}$. Otherwise, return to Step 1.
\end{enumerate}
\end{algorithm}

\begin{algorithm}[H]
\caption{\textbf{Hidden Markov Model recursion}\label{alg:HMMrecursion}}

\begin{enumerate}
\item At time $t=1$
\begin{enumerate}
\item For $i\in[N]$, set $\widetilde{p}(X_{1}^{i},y_{1})=w_{1}(X_{1}^{i})$.
\item Compute $\widetilde{p}(y_{1})=\sum_{i=1}^{N}w_{1}(X_{1}^{i})$.
\item For $i\in[N]$, compute $\widetilde{p}(X_{1}^{i}|y_{1})=\widetilde{p}(X_{1}^{i},y_{1})/\widetilde{p}(y_{1})$.
\end{enumerate}
\item For $t=2,...,T$
\begin{enumerate}
\item For $i\in[N]$, compute $\widetilde{p}(X_{t}^{i},y_{t}|y_{1:t-1})=\sum_{j=1}^{N}\widetilde{p}(X_{t-1}^{j}|y_{1:t-1})w_{t}(X_{t-1}^{j},X_{t}^{i}$).
\item Compute $\widetilde{p}(y_{t}|y_{1:t-1})=\sum_{i=1}^{N}\widetilde{p}(X_{t}^{i},y_{t}|y_{1:t-1})$
and $\widetilde{p}(y_{1:t})=\widetilde{p}(y_{1:t-1})\widetilde{p}(y_{t}|y_{1:t-1})$.
\item For $i\in[N]$, compute $\widetilde{p}(X_{t}^{i}|y_{1:t})=\widetilde{p}(X_{t}^{i},y_{t}|y_{1:t-1})/\widetilde{p}(y_{t}|y_{1:t-1})$.
\end{enumerate}
\item Sample $X_{T}^{K_{T}}\sim\widetilde{p}\left(\cdot|y_{1:T}\right)$.
\item For $t=T-1,...,1$, sample $X_{t}^{K_{t}}\sim\widetilde{p}(\cdot|y_{1:t},X_{t+1}^{K_{t+1}})$
(see equation (\ref{eq:backwardfilter})).
\item Output $X_{1:T}:=X_{1:T}^{K_{1:T}},$ $K_{1:T}$ and $\widehat{Z}=\widetilde{p}(y_{1:T})/N^{T}.$
\end{enumerate}
\end{algorithm}

\begin{algorithm}[H]
\caption{\textbf{Bounding Hidden Markov Model recursion}\label{alg:boundingHMMrecursion}}

\begin{enumerate}
\item At time $t=1$
\begin{enumerate}
\item Set $\overline{p}(X_{1}^{K_{1}},y_{1})=\overline{w}_{1}$ and for
$i\in[N]\setminus\{K_{1}\}$, set $\overline{p}(X_{1}^{i},y_{1})=w_{1}(X_{1}^{i})$.
\item Compute $\overline{p}(y_{1})=\sum_{i=1}^{N}\overline{p}(X_{1}^{i},y_{1})$.
\item For $i\in[N]$, compute $\overline{p}(X_{1}^{i}|y_{1})=\overline{p}(X_{1}^{i},y_{1})/\overline{p}(y_{1})$.
\end{enumerate}
\item For $t=2,...,T$
\begin{enumerate}
\item For $i\in[N]\setminus\{K_{t}\}$, compute
\[
\overline{p}(X_{t}^{i},y_{t}|y_{1:t-1})=\sum_{j\neq K_{t-1}}\overline{p}(X_{t-1}^{j}|y_{1:t-1})w_{t}(X_{t-1}^{j},X_{t}^{i})+\overline{p}(X_{t-1}^{K_{t-1}}|y_{1:t-1})\overline{w}_{t}^{2}(X_{t}^{i}).
\]
\item For $i=K_{t}$, compute
\[
\overline{p}(X_{t}^{K_{t}},y_{t}|y_{1:t-1})=\sum_{j\neq K_{t-1}}\overline{p}(X_{t-1}^{j}|y_{1:t-1})\overline{w}_{t}^{1}(X_{t-1}^{j})+\overline{p}(X_{t-1}^{K_{t-1}}|y_{1:t-1})~\overline{w}_{t}.
\]
\item Compute $\overline{p}(y_{t}|y_{1:t-1})=\sum_{i=1}^{N}\overline{p}(X_{t}^{i},y_{t}|y_{1:t-1})$
and $\overline{p}(y_{1:t})=\overline{p}(y_{t}|y_{1:t-1})\overline{p}(y_{1:t-1})$.
\item For $i\in[N]$, compute $\overline{p}(X_{t}^{i}|y_{1:t})=\overline{p}(X_{t}^{i},y_{t}|y_{1:t-1})/\overline{p}(y_{t}|y_{1:t-1})$.
\end{enumerate}
\item Output $\overline{Z}=\overline{p}(y_{1:T})/N^{T}.$
\end{enumerate}
\end{algorithm}

\subsection{Theoretical results}

We establish here the validity of the ERS scheme described in Algorithm
\ref{alg:RES-Dyn}.
\begin{prop}
\label{prop:ERSdynisvalid}For any $N\geq1$, Algorithm \ref{alg:RES-Dyn}
returns an exact sample from the distribution $\pi$ defined in (\ref{eq:targetSSM})
and its corresponding average acceptance probability satisfies\textup{
$p_{\text{ERS}}\rightarrow1$} as $N\rightarrow\infty$.
\end{prop}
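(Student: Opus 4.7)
The plan is to follow the template of Proposition~\ref{prop:ERSstatic} and recast Algorithm~\ref{alg:RES-Dyn} as a standard rejection sampling scheme on the extended space $(k_{1:T},\mathbf{x}_{1:T},x_{1:T})\in[N]^T\times\mathcal{X}^{NT}\times\mathcal{X}^T$. The extended proposal, analogous to (\ref{eq:extendedproposal}), is
\[
\overline{q}(k_{1:T},\mathbf{x}_{1:T},x_{1:T})=\left\{\prod_{t=1}^{T}\prod_{i=1}^{N} q_t(x_t^i)\right\}\widetilde{p}(x_{1:T}^{k_{1:T}}\mid y_{1:T})\,\delta_{x_{1:T}^{k_{1:T}}}(x_{1:T}),
\]
which by Algorithm~\ref{alg:HMMrecursion} produces exactly Steps~1--2 of Algorithm~\ref{alg:RES-Dyn}. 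Accepted samples have density $\overline{\pi}\propto\overline{q}\cdot\widehat{Z}/\overline{Z}$, and a direct algebraic manipulation gives the dynamic analogue of (\ref{eq:identitypmcmc}), namely $\overline{q}\cdot\widehat{Z}/Z=\widetilde{\pi}$ with
\[
\widetilde{\pi}(k_{1:T},\mathbf{x}_{1:T},x_{1:T}):=\frac{\pi(x_{1:T}^{k_{1:T}})}{N^T}\delta_{x_{1:T}^{k_{1:T}}}(x_{1:T})\prod_{t=1}^{T}\prod_{i\neq k_t} q_t(x_t^i),
\]
so that $\overline{\pi}\propto\pi(x_{1:T}^{k_{1:T}})\,\delta_{x_{1:T}^{k_{1:T}}}(x_{1:T})\prod_{t,\,i\neq k_t}q_t(x_t^i)/\overline{Z}$.

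The crux of the correctness argument is to verify two properties of the quantity $\overline{Z}$ produced by Algorithm~\ref{alg:boundingHMMrecursion}: (i) it is a genuine upper bound on $\widehat{Z}$, and (ii) it does not depend on any of the chosen-path states $X_t^{K_t}$, only on the indices $K_{1:T}$ and the remaining particles. Claim (i) is a term-by-term check on the $N^T$ path products that make up $\widehat{Z}$: every weight indexed by a $K_t$ coordinate is replaced by the matching upper bound from (\ref{eq:importanceweightsbounded})--(\ref{eq:importanceweigtspartialbounds}). Claim (ii) is proved by induction on $t$: at time $1$ the substitution $\overline{p}(X_1^{K_1},y_1)=\overline{w}_1$ removes the dependence on $X_1^{K_1}$, and at stage $t$ the rules in Steps~2(a)--(b) replace $w_t(\cdot,X_t^{K_t})$ and $w_t(X_{t-1}^{K_{t-1}},\cdot)$ by bounding functions that do not evaluate at $X_t^{K_t}$, while $\overline{p}(X_{t-1}^{K_{t-1}}\mid y_{1:t-1})$ inherits the required independence from the inductive hypothesis. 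Combined with the exchangeability of $X_t^1,\ldots,X_t^N\overset{\mathrm{i.i.d.}}{\sim}q_t$, these two facts imply, exactly as in the static case, that the integral $\sum_{k_{1:T}}\int\overline{\pi}(k_{1:T},\mathbf{x}_{1:T},x_{1:T})\,\mathrm{d}\mathbf{x}_{1:T}$ is proportional to $\pi(x_{1:T})$, establishing that the output of Algorithm~\ref{alg:RES-Dyn} is an exact draw from $\pi$.

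For the asymptotic acceptance probability I will reuse the Jensen step from the static proof: $p_{\text{ERS}}=\mathbb{E}_{\overline{q}}[\widehat{Z}/\overline{Z}]=Z\,\mathbb{E}_{\widetilde{\pi}}[1/\overline{Z}]\geq Z/\mathbb{E}_{\widetilde{\pi}}[\overline{Z}]$. Under $\widetilde{\pi}$ the chosen-path particles form a trajectory distributed according to $\pi$ while the remaining $N-1$ particles at each time are i.i.d.\ from $q_t$, so I will decompose the $N^T$ path products constituting $\overline{Z}$ into \emph{pure} paths avoiding $K_t$ at every time and \emph{impure} paths touching the chosen trajectory at some time. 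The $(N-1)^T$ pure paths each have expectation $Z$ by unbiasedness of importance sampling, contributing $(1-1/N)^T Z\to Z$; the impure paths involve only deterministic upper bounds (or bounding functions dominated by $\overline{w}_t$), so their total contribution is bounded by $[1-(1-1/N)^T]\,\overline{w}_1\prod_{t=2}^{T}\overline{w}_t\to 0$ for fixed $T$. Hence $\mathbb{E}_{\widetilde{\pi}}[\overline{Z}]\to Z$ and $p_{\text{ERS}}\to 1$. The main obstacle I anticipate is the inductive bookkeeping behind claim (ii), since the forward recursion entangles all particles through the normalized weights $\overline{p}(X_t^i\mid y_{1:t})$ and one has to propagate the independence from $X_t^{K_t}$ through both the numerator and the normalizing denominator at each stage; once this is in place, the remainder of the proof is a transcription of the static argument.
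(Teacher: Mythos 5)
Your proposal is correct and follows essentially the same route as the paper: the same extended proposal $\overline{q}(k_{1:T},\mathbf{x}_{1:T},x_{1:T})$, the same identity $\overline{q}\,\widehat{Z}/Z=\widetilde{\pi}$, the same marginalization argument relying on $\overline{Z}$ not depending on $x_{1:T}^{k_{1:T}}$, and the same Jensen bound $p_{\text{ERS}}\geq Z/\mathbb{E}_{\widetilde{\pi}}[\,\overline{Z}\,]$ combined with the decomposition of $\overline{Z}$ into the $(N-1)^{T}$ paths avoiding $K_{1:T}$ (each with expectation $Z$) and the remaining paths bounded by $\prod_{t}\overline{w}_{t}$, which is exactly the bound (\ref{eq:crudeboundonZhat}). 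The only difference is that you spell out as an explicit induction the fact that Algorithm \ref{alg:boundingHMMrecursion} produces a quantity independent of the selected states, which the paper asserts ``by construction''; this is a worthwhile detail but not a different argument.
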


\begin{proof}
For $N=1$, ERS is a standard RS scheme so it returns samples from
$\pi$. The first part of the proof is essentially identical to the
first part of the proof of Proposition \ref{prop:ERSstatic}. Sampling
from the proposal in Step 2 of Algorithm \ref{alg:RES-Dyn} can be
rewritten as sampling from the distribution
\begin{equation}
\overline{q}(k_{1:T},\mathbf{x}_{1:T},x_{1:T})=\left\{ {\textstyle \prod\nolimits _{t=1}^{T}}{\textstyle \prod\nolimits _{i=1}^{N}}q_{t}(x_{t}^{i})\right\} \frac{w(x_{1:T}^{k_{1:T}})}{\sum_{i_{1:T}\in\left[N\right]^{T}}w(x_{1:T}^{i_{1:T}})}\delta_{x_{1:T}^{k_{1:T}}}(x_{1:T}),\label{eq:proposalERSdyn}
\end{equation}
where $\mathbf{x}_{t}:=(x_{t}^{1},...,x_{t}^{N}).$ Indeed, it is
easy to check that the marginal distribution of $X_{1:T}$ under $\overline{q}$
satisfies (\ref{eq:proposalembeddedHMM}).

The distribution of the samples accepted by Algorithm \ref{alg:RES-Dyn}
is thus given by
\begin{align}
\overline{\pi}(k_{1:T},\mathbf{x}_{1:T},x_{1:T}) & \propto\overline{q}(k_{1:T},\mathbf{x}_{1:T},x_{1:T})\frac{\widehat{Z}}{\overline{Z}},\label{eq:targetextendedERSdyn}
\end{align}
where we recall that $\widehat{Z}$ is a function of $\mathbf{x}_{1:T}$
while, by construction, $\overline{Z}$ is only a function of $\mathbf{x}_{1:T}\setminus\{x_{1:T}\}.$
One can verify that
\begin{align}
\overline{q}(k_{1:T},\mathbf{x}_{1:T},x_{1:T})\frac{\widehat{Z}}{Z} & =\frac{\pi(x_{1:T}^{k_{1:T}})}{N^{T}}\delta_{x_{1:T}^{k_{1:T}}}(x_{1:T}){\textstyle \prod\nolimits _{t=1}^{T}}{\textstyle \prod\nolimits _{i=1,i\neq k_{t}}^{N}}q_{t}(x_{t}^{i})\nonumber \\
 & :=\widetilde{\pi}\left(k_{1:T},\mathbf{x}_{1:T},x_{1:T}\right).\label{eq:identityEHMM}
\end{align}
The probability distribution $\widetilde{\pi}(k_{1:T},\mathbf{x}_{1:T},x_{1:T})$
was introduced in \citep{Neal2003} and the identity relating $\widetilde{\pi}$
to $\overline{q},\widehat{Z},Z$ has been established in \citep[Section 3.2]{FinkeDoucetJohansen2016}.
By using (\ref{eq:identityEHMM}), we can rewrite (\ref{eq:targetextendedERSdyn})
as
\begin{equation}
\overline{\pi}(k_{1:T},\mathbf{x}_{1:T},x_{1:T})\propto\pi(x_{1:T}^{k_{1:T}})\delta_{x_{1:T}^{k_{1:T}}}(x_{1:T})\underset{\text{term independent of \ensuremath{x_{1:T}=x_{1:T}^{k_{1:T}}} }}{\underbrace{\frac{{\textstyle \prod\nolimits _{t=1}^{T}}{\textstyle \prod\nolimits _{i=1,i\neq k_{t}}^{N}}q_{t}\left(x_{t}^{i}\right)}{\overline{Z}}}}.\label{eq:targetpibarstatic-1}
\end{equation}
From (\ref{eq:targetpibarstatic-1}), it follows directly that
\begin{align*}
\overline{\pi}(x_{1:T}) & :={\textstyle \sum_{k_{1:T}\in\left[N\right]^{T}}}\int\cdots\int_{\mathcal{X}^{TN}}\overline{\pi}(k_{1:T},\mathbf{x}_{1:T},x_{1:T})\mathrm{d}\mathbf{x}_{1:T}\\
 & =\pi(x_{1:T}).
\end{align*}
Hence we have shown that Algorithm \ref{alg:RES-Dyn} is a RS algorithm
targeting $\overline{\pi}(k_{1:T},\mathbf{x}_{1:T},x_{1:T})$ using
the proposal $\overline{q}(k_{1:T},\mathbf{x}_{1:T},x_{1:T})$. As
$\overline{\pi}(x_{1:T})=\pi(x_{1:T})$, this returns in particular
a sample from $\pi$.

Using arguments similar to the ones used in the proof of Proposition
\ref{prop:ERSstatic}, we can show that the average acceptance probability
$p_{\text{ERS}}$ of a proposal for ERS satisfies
\begin{align*}
p_{\text{ERS}}=\mathbb{E}_{\overline{q}}\left[\frac{\widehat{Z}}{\overline{Z}}\right] & \geq\frac{Z}{\mathbb{E}_{\widetilde{\pi}}\left[\thinspace\overline{Z}\thinspace\right]},
\end{align*}
where $\widetilde{\pi}$ is defined in (\ref{eq:identityEHMM}). By
bounding any term of the sum appearing in $\overline{Z}$ involving
at least one index such that $i_{t}=K_{t}$ for $t\in[T]$ by $\prod_{t=1}^{T}\overline{w}_{t}$,
we obtain the following upper bound
\begin{align}
\mathbb{E}_{\widetilde{\pi}}\left[\thinspace\overline{Z}\thinspace\right]\leq & \left(1-\frac{1}{N}\right)^{T}Z+\left(1-\left(1-\frac{1}{N}\right)^{T}\right)\prod_{t=1}^{T}\overline{w}_{t}\label{eq:crudeboundonZhat}
\end{align}
as $X_{t}^{i_{t}}\sim q_{t}$ for $t\in[T]$ and $i\neq K_{t}$ under
$\widetilde{\pi}$. Hence the r.h.s. of (\ref{eq:crudeboundonZhat})
converges to $Z$ as $N\rightarrow\infty$ and thus $p_{\text{ERS}}\rightarrow1$
as $N\rightarrow\infty$.
\end{proof}
Proposition \ref{prop:ERSdynisvalid} does not guarantee that at fixed
computational efforts the proposed algorithm ERS is competitive compared
to a standard RS algorithm using the proposal $\prod_{t=1}^{T}q_{t}(x_{t})$.
For $T=1$, we have seen in Section \ref{sec:Ensemble-Rejection-Samplingstatic}
that this is not the case. However the following toy example suggests
that ERS exhibits desirable properties for large $T$ if we scale
$N$ with $T$ appropriately.
\begin{example}
Consider a scenario where $y_{1:T}=(y,...,y)$,
\[
\gamma(x_{1:T})=p(x_{1:T},y_{1:T})={\textstyle \prod\nolimits _{t=1}^{T}}\mu(x_{t})g(y|x_{t}),
\]
so
\[
Z=\mathcal{Z}{}^{T},\text{ \ where \ }\mathcal{Z}=\int\mu(x)g(y|x)\mathrm{d}x.
\]
Obviously the corresponding target distribution $\pi$ factorizes
in $T$ independent terms in this case. However, for the sake of illustration,
we ignore this fact and apply directly ERS to $\pi$. Standard RS
using a proposal $\prod_{t=1}^{T}q_{1}(x_{t})$ has an average acceptance
probability $p_{\text{RS}}=p_{\text{A}}^{T}$, where $p_{\text{A}}=\mathcal{Z}/\overline{w}_{1}$.
As long as $q_{1}(x)$ is not equal to $\mu(x)g(y|x)/\mathcal{Z}$
almost everywhere, then $p_{\text{A}}<1$ and so $p_{\text{RS}}$
decreases exponentially fast with $T$. In the same scenario, simple
calculations shows that ERS has an average acceptance probability
satisfying
\begin{align}
p_{\text{ERS}}\geq & \frac{\left(N\mathcal{Z}\right){}^{T}}{(\left(N-1\right)\mathcal{Z}+\overline{w}_{1})^{T}}\nonumber \\
= & \frac{1}{(1+\frac{1}{N}(p_{\text{A}}^{-1}-1))^{T}}\label{eq:inequalityprobaacceptance}
\end{align}
If we select $N=\left\lceil \beta T\right\rceil $ where $\beta>0$,
then the r.h.s. of (\ref{eq:inequalityprobaacceptance}) converges
to $\exp\left\{ \beta^{-1}(1-p_{\text{A}}^{-1})\right\} $ as $T\rightarrow\infty$;
i.e. we can control $p_{\text{ERS}}$ by only increasing $N$ linearly
with $T$.

We now establish a similar result for the more realistic scenario
where the target distribution does not factorize. Our result relies
on a strong regularity condition in the spirit of the assumptions
commonly used in the particle filtering literature to establish quantitative
bounds \citep[Chapter 4]{DelMoral2004}. It will typically hold for
a compact state-space $\mathcal{X}$ or when the support of the target
is given by a compact subspace of $\mathcal{X}^{T}$.
\end{example}

\begin{prop}
\label{prop:Cubicscaling}Assume that there exist $\underline{w},\overline{w}>0$
such that $\underline{w}<w_{1}(x)<\overline{w}$ and $\underline{w}<w_{t}(x,x')<\overline{w}$
for all $x,x'\in\mathcal{X}\times\mathcal{X}$ and $t\geq2$ and let
$\delta=(\overline{w}/\underline{w})^{2}$, then the average acceptance
probability \textup{$p_{\text{ERS}}$} of Algorithm \ref{alg:RES-Dyn}
satisfies
\[
p_{\textup{ERS}}\geq\frac{1}{\left(1+\frac{\delta-1}{N}\right)^{T}}.
\]
\textup{\emph{In particular, for $N=\left\lceil \beta T\right\rceil $
where $\beta>0$, we have}}\textup{
\[
\liminf_{T\rightarrow\infty}p_{\textup{ERS}}\geq\exp\{\beta^{-1}(1-\delta)\}.
\]
}
\end{prop}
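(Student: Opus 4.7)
The plan is to start from the bound $p_{\text{ERS}}\geq Z/\mathbb{E}_{\widetilde{\pi}}[\overline{Z}]$ already established in the proof of Proposition~\ref{prop:ERSdynisvalid}, and to sharpen the estimate of $\mathbb{E}_{\widetilde{\pi}}[\overline{Z}]$ under the present regularity assumption. Writing $\overline{Z}=N^{-T}\sum_{i_{1:T}\in[N]^T}\widetilde{W}^{i_{1:T}}$, where $\widetilde{W}^{i_{1:T}}$ denotes the product of the substituted weights used in Algorithm~\ref{alg:boundingHMMrecursion}, I will group the $N^T$ tuples by their match pattern $S:=\{t\in[T]:i_t=K_t\}$. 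There are $(N-1)^{T-|S|}$ tuples with a given pattern $S$, and by exchangeability of the particles whose index differs from $K_t$ they share a common expectation $C_S:=\mathbb{E}_{\widetilde{\pi}}[\widetilde{W}^{i_{1:T}}]$, so that
\[
\mathbb{E}_{\widetilde{\pi}}[\overline{Z}]=\frac{1}{N^T}\sum_{S\subseteq[T]}(N-1)^{T-|S|}\,C_S.
\]

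The heart of the argument will be the inequality $C_S\leq\delta^{|S|}Z$. By design of Algorithm~\ref{alg:boundingHMMrecursion}, none of the substituted weights depends on $X_t^{K_t}$ for $t\in S$; inserting dummy $q_t$-integrations in those coordinates rewrites
\[
C_S=\int\Bigl(\prod_{t=1}^T q_t(u_t)\Bigr)\widetilde{W}(u_{1:T})\,\mathrm{d}u_{1:T},
\]
where $\widetilde{W}(u_{1:T})$ evaluates $\widetilde{W}^{i_{1:T}}$ at generic arguments $u_{1:T}$. I will then establish the pointwise comparison $\widetilde{W}(u_{1:T})\leq\delta^{|S|}W(u_{1:T})$ with $W(u_{1:T}):=w_1(u_1)\prod_{t=2}^Tw_t(u_{t-1},u_t)$. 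Setting $M(S):=\{t\in[T]:t\in S\text{ or }t-1\in S\}$, the factor $\widetilde{w}_t$ coincides with $w_t$ whenever $t\notin M(S)$, while for $t\in M(S)$ the substituted value---one of $\overline{w}_t$, $\overline{w}_t^1(\cdot)$ or $\overline{w}_t^2(\cdot)$---is pointwise bounded by $\overline{w}$ and dominates $w_t$, so that $\widetilde{w}_t(u)/w_t(u)\leq\overline{w}/\underline{w}=\sqrt{\delta}$. Since $|M(S)|\leq 2|S|$, the overall ratio is at most $\delta^{|S|}$, and integrating against $\prod q_t$ yields $C_S\leq\delta^{|S|}Z$.

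Plugging the estimate on $C_S$ into the decomposition above and applying the binomial theorem gives
\[
\mathbb{E}_{\widetilde{\pi}}[\overline{Z}]\leq\frac{Z}{N^T}\sum_{k=0}^T\binom{T}{k}(N-1)^{T-k}\delta^k=\frac{Z(N-1+\delta)^T}{N^T}=Z\Bigl(1+\frac{\delta-1}{N}\Bigr)^T,
\]
which immediately delivers $p_{\text{ERS}}\geq(1+(\delta-1)/N)^{-T}$. The $\liminf$ claim then follows from the standard limit $(1+(\delta-1)/N)^T\to\exp((\delta-1)/\beta)$ as $T\to\infty$ with $N=\lceil\beta T\rceil$.

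The main obstacle will be the pointwise comparison $\widetilde{W}\leq\delta^{|S|}W$: one must exploit the regularity assumption simultaneously as a ceiling $\overline{w}$ on every substituted factor and as a floor $\underline{w}$ on every genuine $w_t$ factor, together with the geometric fact that substitution occurs at only $|M(S)|\leq 2|S|$ positions. Without this refinement, the crude bound (\ref{eq:crudeboundonZhat}) used in the proof of Proposition~\ref{prop:ERSdynisvalid} would replace $\delta^{|S|}$ by the $|S|$-independent factor $\delta^{T/2}$, which is too large to yield the desired product form $(1+(\delta-1)/N)^T$.
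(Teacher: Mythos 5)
Your proposal is correct and follows essentially the same route as the paper's proof: the same starting inequality $p_{\text{ERS}}\geq Z/\mathbb{E}_{\widetilde{\pi}}[\,\overline{Z}\,]$, the same decomposition of $\overline{Z}$ according to which indices match $K_{1:T}$, the same observation that each match perturbs at most two incremental weights so that the per-match inflation factor is $(\overline{w}/\underline{w})^{2}=\delta$, and the same binomial resummation. Your grouping by the match set $S$ and the explicit dummy-integration step merely make rigorous what the paper states more informally; no new idea or gap is involved.
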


\begin{proof}
To lower bound $p_{\text{ERS}}$, we use again the inequality
\begin{align*}
p_{\text{ERS}}\geq & \frac{Z}{\mathbb{E}_{\widetilde{\pi}}\left[\thinspace\overline{Z}\thinspace\right]},
\end{align*}
where $\mathbb{\widetilde{\pi}}$ is defined in (\ref{eq:identityEHMM})
and then we exploit the assumption on the incremental weights to compute
an upper bound on $\mathbb{E}_{\widetilde{\pi}}\left[\thinspace\overline{Z}\thinspace\right]$
that is tighter than the bound (\ref{eq:crudeboundonZhat}) used in
the proof of Proposition \ref{prop:ERSdynisvalid}.

The term $\overline{Z}$ is given by a sum over the indices $i_{1:T}\in\left[N\right]^{T}$.
Of the total $N^{T}$ terms, $(N-1)^{T}$ of these terms are such
that such $i_{1}\neq K_{1},i_{2}\neq K_{2},...,i_{T}\neq K_{T}$ and
each such term has expectation $Z$ under $\widetilde{\pi}$. Among
the remaining $N^{T}-\left(N-1\right)^{T}$ terms, $T$$(N-1)^{T-1}$
have exactly one index $t_{1}\in[T]$ such that $i_{t_{1}}=K_{t_{1}}$
and more generally $\binom{{T}}{n}$$(N-1)^{T-n}$ terms have exactly
$n$ distinct indices $t_{1},t_{2},..,t_{n}\in[T]^{n}$ such that
$i_{t_{m}}=K_{t_{m}}$ for $m\in[n]$. For each such term, we have
to bound the corresponding product of the $T$ incremental weights.
Each index $i_{t_{m}}=K_{t_{m}}$ can impact at most two terms in
the product, e.g., $w_{t_{m}}^{1}(x_{t_{m}-1}^{i_{t_{m}-1}})w_{t_{m}+1}^{2}(x_{t_{m+1}}^{i_{t_{m}+1}})$
if one considers a sequence $i_{1:T}$ such that $i_{t_{m}-1}\neq K_{t_{m}-1}$
and $i_{t_{m}+1}\neq K_{t_{m}+1}$. However, the assumption on the
incremental weights yields
\begin{align*}
 & \overline{w}_{t_{m}}^{1}(x_{t_{m}-1}^{i_{t_{m}-1}})\overline{w}_{t_{m}+1}^{2}(x_{t_{m+1}}^{i_{t_{m}+1}})\\
= & \frac{\overline{w}_{t_{m}}^{1}(x_{t_{m}-1}^{i_{t_{m}-1}})\overline{w}_{t_{m}+1}^{2}(x_{t_{m+1}}^{i_{t_{m}+1}})}{w_{t_{m}}(x_{t_{m}-1}^{i_{t_{m}-1}},x_{t_{m}}^{i_{t_{m}}})w_{t_{m}+1}(x_{t_{m}}^{i_{t_{m}}},x_{t_{m+1}}^{i_{t_{m}+1}})}w_{t_{m}}(x_{t_{m}-1}^{i_{t_{m}-1}},x_{t_{m}}^{i_{t_{m}}})w_{t_{m}+1}(x_{t_{m}}^{i_{t_{m}}},x_{t_{m+1}}^{i_{t_{m}+1}})\\
\leq & \left(\frac{\overline{w}}{\underline{w}}\right)^{2}w_{t_{m}}(x_{t_{m}-1}^{i_{t_{m}-1}},x_{t_{m}}^{i_{t_{m}}})w_{t_{m}+1}(x_{t_{m}}^{i_{t_{m}}},x_{t_{m+1}}^{i_{t_{m}+1}})\\
= & \delta w_{t_{m}}(x_{t_{m}-1}^{i_{t_{m}-1}},x_{t_{m}}^{i_{t_{m}}})w_{t_{m}+1}(x_{t_{m}}^{i_{t_{m}}},x_{t_{m+1}}^{i_{t_{m}+1}}).
\end{align*}
Hence, we have
\begin{align*}
\mathbb{E}_{\widetilde{\pi}}\left[\thinspace\overline{Z}\thinspace\right]\leq & \frac{Z}{N^{T}}\sum_{n=0}^{T}\binom{{T}}{n}(N-1)^{T-n}\delta^{n}\\
= & Z\sum_{n=0}^{T}\binom{{T}}{n}\left(1-\frac{1}{N}\right)^{T-n}\left(\frac{\delta}{N}\right)^{n}\\
= & Z\left(1+\frac{\delta-1}{N}\right)^{T}.
\end{align*}
The result now follows directly.
\end{proof}
Proposition \ref{prop:Cubicscaling} ensures that the average acceptance
probability $p_{\text{ERS}}$ does not vanish as long as $N$ increases
linearly with $T$. As the computational complexity to sample from
the proposal distribution is $O(N^{2}T)$, this shows that the expected
computational cost to obtain an exact sample is of order $O(T^{3})$.

\subsection{Settings and Extensions}

We discuss here how to select the proposal distributions and propose
various extensions of Algorithm \ref{alg:RES-Dyn}.

\subsubsection{Selection of the proposal distributions}

The need for bounded importance weights, see (\ref{eq:importanceweightsbounded}),
limits the range of applicability of ERS but can be satisfied for
a useful class of non-linear state-space models. For example, consider
a model such that
\[
\sup_{x}\mu(x)\leq\overline{\mu}<\infty,\quad\sup_{x,x'}f(\left.x'\right\vert x)\leq\overline{f}<\infty,\quad g(y_{t}):=\int_{\mathfrak{\mathcal{X}}}g(\left.y_{t}\right\vert x)\mathrm{d}x<\infty,
\]
for all $t$. If we use the proposals given by
\[
q_{t}(x_{t})\propto g(\left.y_{t}\right\vert x_{t}),
\]
ERS applies with $w_{1}(x_{1})=\mu(x_{1}),\overline{w}_{1}=\overline{\mu}$
and $w_{t}(x_{t-1},x_{t})=f(x_{t}|x_{t-1}),\overline{w}_{t}=\overline{f}$
for $t\geq2$\footnote{The target is not modified by replacing $g(\left.y_{t}\right\vert x)$
by $g(\left.y_{t}\right\vert x)/g(y_{t})$ so $g(y_{t})$ does not
appear in the bounds $\overline{w}_{t}$.}. Such proposals have been used in the particle filtering literature
in \citep{LinZhangChen2005}.

We might also have a model such that
\[
f(\left.x'\right\vert x)\leq c\nu\left(x'\right)
\]
where $\nu$ is a probability density function. In this case, it is
possible to use a proposal of the form
\[
q_{t}(x_{t})\propto\nu\left(x_{t}\right)g(\left.y_{t}\right\vert x_{t})
\]
for $t\geq2$ and $q_{1}(x_{1})\propto\mu(x_{1})g(y_{1}|x_{1})$.
In this case, ERS applies with $w_{1}(x_{1})=\overline{w}_{1}=1$
and $w_{t}(x_{t-1},x_{t})=f(x_{t}|x_{t-1})/\nu\left(x_{t}\right),\overline{w}_{t}=c$
for $t\geq2$.

\subsubsection{Intractable weights}

Consider a scenario where the incremental weights are not tractable
as $\mu\left(x\right),f(x'|x)$ or $g(y|x)$ cannot be evaluated pointwise.
If one has access to a non-negative unbiased estimator of the intractable
quantity that is upper bounded almost surely by a known finite constant,
then ERS can be applied directly without any modification. The correctness
of this procedure follows from a standard auxiliary variable construction.
This shows for example that ERS can be applied to simulate exactly
from the posterior distribution of the class of diffusions considered
in \citep{Fearnhead2008} whose state is observed at discrete times
in some additive Gaussian noise.

\subsubsection{Non-integrability of $g(\left.y\right\vert x)$ w.r.t. $x$ and missing
data\label{subsec:Non-integrability}}

Consider a scenario where $\int g(\left.y_{t}\right\vert x)\mathrm{d}x<\infty$
but $\int g(\left.y_{t+1}\right\vert x)\mathrm{d}x=\infty$. We can
thus use a proposal $q\left(x|y\right)\propto g(y|x)$ at time $t$
but not at time $t+1$. To bypass this problem, we can consider instead
using an incremental weight at time $t$ of the form
\[
w_{t}(x_{t-1},x_{t}):=\frac{f(\left.x_{t}\right\vert x_{t-1})g(\left.y_{t}\right\vert x_{t})p(y_{t+1}|x_{t})}{q_{t}(x_{t})},
\]
where $p(y_{t+1}|x_{t})=\int f(\left.x_{t+1}\right\vert x_{t})g(\left.y_{t+1}\right\vert x_{t+1})\mathrm{d}x_{t+1}$.
A non-negative estimate of this quantity can be obtained if it cannot
be calculated in closed-form; e.g., sample $X_{t+1}\sim f(\left.\cdot\right\vert x_{t})$
and return $g(\left.y_{t+1}\right\vert X_{t+1})$. This can be directly
extended to scenarios where $\int g(\left.y_{k}\right\vert x)\mathrm{d}x=\infty$
for $k=t+1,...,t+\triangle$ for $\triangle\geq1$. Hence we can also
use ERS in this scenario to sample the posterior distribution of the
states $x_{t}$ such that $\int g(\left.y_{t}\right\vert x)\mathrm{d}x<\infty$
and then sampling the remaining states according to their full conditionals
of the form $p(x_{t+1:t+\triangle}|x_{t},x_{t+\triangle+1},y_{t+1:t+\triangle})$
using standard RS. Obviously, this method becomes inefficient if $\triangle$
is large.

Similarly, we might be in a scenario where $x_{t}=(z_{t},...,z_{t-\triangle+1})$
and, slightly abusing notation, $g(y_{t}|x_{t})=g(y_{t}|z_{t})$ with
$\int g(y_{t}|z_{t})\mathrm{d}z{}_{t}<\infty$. In this case, we can
also pull observations together by blocks of length $\triangle$ to
sample the latent state $z_{t}$ through a distribution proportional
to the product of $g(y_{t}|z_{t})$ over a block.

\subsubsection{Other extensions}

There are many other possible extensions of interest. We only briefly
mention two of them here. For example, we can make the number of particles
depends on the time index to address scenarios where one has outliers.
We can also extend the ERS procedures to sample exactly from the posterior
distribution of the continuous latent states of a tree-structure directed
graphical model as a generalized version of the key forward-backward
recursion is available in these scenarios; see, e.g., \citep{CowellDawidLauritzenSpiegelhalter2006,Wilkinson2002}.

\section{Examples\label{sec:examples}}

All the simulations have been performed in Matlab on a standard desktop
PC. The code and data will be soon made publicly available.

\subsection{Conditioned random walks}

To demonstrate our methodology on a rare event problem, we consider
here the problem of simulating conditioned random walks discussed
in \citep{Adorisio2018,DelMoralDoucet2004} where $\mathcal{X}=\mathbb{R}$,
$\mu\left(x\right)=\mathcal{U}(x;\mathcal{S})$ is the uniform distribution
on $\mathcal{S}$, $f(x'|x)=\mathcal{N}(x';\psi\left(x\right),\sigma^{2})$
and $G_{t}(x)=1_{\mathcal{S}}\left(x\right)$ for some bounded set
$\mathcal{S}\subset\mathbf{\mathbb{X}}.$ This models the evolution
of a particle in an absorbing medium, the particle being absorbed
whenever it steps outside $\mathcal{S}$. Here the non-negative function
$G_{t}(x)$ replaces $g(y_{t}|x_{t})$ in equation (\ref{eq:unnormalizedtargetSSM}).
In this case, the distribution $\pi$ corresponds to the distribution
of the paths of the particle conditional upon not having been absorbed
by time $T$. A method to sample exactly from this distribution has
been proposed in \citep{Adorisio2018} but it is only applicable to
very specific dynamics and sets $\mathcal{S}$. ERS is particularly
well-suited to such problems. To implement ERS, we use $q_{t}(x)=\mathcal{U}(x;\mathfrak{\mathcal{S}})$
for all $t\geq1$ so that $\overline{w}_{1}=1$ and $\overline{w}_{t}=1/\sqrt{2\pi\sigma^{2}}$
for $t\geq2$. The assumptions of Proposition \ref{prop:Cubicscaling}
are satisfied. In our simulations, we consider $\mathcal{S}=[0,1]$,
$\psi\left(x\right)=x$ and $\sigma=0.2$.

\begin{table}
\begin{centering}
\par\end{centering}
\begin{centering}
\caption{\label{tab:Estimated-average-acceptance_ConditionedRW}Estimated average
acceptance probability $p_{\text{ERS}}$ in \% computed using 500
samples as a function of $T$ and $N$}
\par\end{centering}
\centering{}%
\begin{tabular}{|c|c|c|c|}
\hline
$\widehat{p}_{\text{ERS}}$  & $N=T$ & $N=2T$ & $N=5T$\tabularnewline
\hline
$T=100$ & 3.19 & 17.29 & 49.00\tabularnewline
\hline
$T=250$ & 2.91 & 16.92 & 47.75\tabularnewline
\hline
$T=500$ & 2.82 & 16.64 & 48.50\tabularnewline
\hline
\end{tabular}
\end{table}

We consider different values for $T$ as well as corresponding values
for $N$. We estimate $p_{\text{ERS}}=\mathbb{E}_{\overline{q}}[\widehat{Z}/\overline{Z}]$
by Monte Carlo using 500 samples from $\overline{q}.$ In this scenario
and other examples considered here, the relative variance of the corresponding
estimator is small as $\widehat{Z}$ and $\overline{Z}$ are strongly
positively correlated by construction. The results are summarized
in Table \ref{tab:Estimated-average-acceptance_ConditionedRW}. For
this time-homogeneous model, the average acceptance probability is,
as expected, fairly stable across $T$ for a fixed ratio $N/T$.

\subsection{Non-linear autoregressive process}

Consider the following non-linear autoregressive model where $\mathcal{X}=\mathbb{R}$,
$\mu\left(x\right)=\mathcal{N}(x;0,1)$, $f(x'|x)=\mathcal{N}(x';\phi\tanh\left(x\right),\sigma_{v}^{2})$
and $g(y|x)=\mathcal{N}(y;x,\sigma_{w}^{2})$ with $\phi=0.9$, $\sigma_{v}=0.3$
and $\sigma_{w}=0.1$. We select $q_{t}(x_{t}|y_{t})\propto g\left(y_{t}|x_{t}\right)$
and the corresponding bounds are $\overline{w}_{1}=1/\sqrt{2\pi}$
and $\overline{w}_{t}=1/\sqrt{2\pi\sigma_{v}^{2}}$ for $t\geq2$.
For a given realization of $T=500$ data points, the estimated average
acceptance probability $p_{\text{ERS}}$ using 500 samples is $p_{\text{ERS}}$
is $0.79\%$ for $N=500$, $7.34\%$ for $N=1000$ and $24.34\%$
using $N=2000$.

\subsection{Stochastic volatility model}

Consider the following univariate stochastic volatility model where
$\mathcal{X}=\mathbb{R}$, $\mu\left(x\right)=\mathcal{N}(x;0,\sigma^{2}/(1-\phi^{2}))$,
$f(x'|x)=\mathcal{N}(x';\phi x,\sigma^{2})$ and $g(y|x)=\mathcal{N}(y;0,\beta\exp\left(x/2\right))$;
see, e.g., \citep{PittShephard1999}. We have $\log Y_{t}^{2}=X_{t}+\log\beta^{2}+W_{t}$
where $\exp\left(W_{t}\right)\sim\chi^{2}(1)$. This suggests using
for $q_{t}(x_{t}|y_{t})$ the distribution obtained by sampling $X_{t}$
using $\log y_{t}^{2}-\log\beta^{2}-W_{t}$ where $\exp\left(W_{t}\right)\sim\chi^{2}(1).$
The corresponding bounds are given by $\overline{w}_{1}=\sqrt{(1-\phi^{2})/2\pi}$
and $\overline{w}_{t}=1/\sqrt{2\pi\sigma^{2}}$ for $t\geq2$. We
apply ERS to the S\&P 500 index daily data using parameters $\phi=0.95$,
$\beta=0.7$ and $\sigma=0.3$ as in \citep{Poyiadjis2011} for $T=200$
data points corresponding to the period 09/08/1990 to 24/05/1991.
For $N=6000$, the estimated average acceptance probability $p_{\text{ERS}}$
is $4.73\%$ estimated using 500 samples.

\section*{Acknowledgments}

The authors are grateful to the Institute of Statistical Mathematics
and the Graduate Research Institute for Policy Studies in Tokyo where
a part of this research was carried out. We also thank Alex Shestopaloff
and James Thornton for their comments.

\end{document}